\newcommand{\myparatight}[1]{\smallskip\noindent{\bf {#1}:}~}
\pgfplotsset{compat=1.17}
\newtheorem{thm}{Theorem}
\newcommand{\alg}{{RAGuard}\xspace}
\definecolor{myurlcolor}{rgb}{0.1, 0.2, 0.8}
\definecolor{greyBackground}{RGB}{230,248,255}
\newcolumntype{a}{>{\columncolor{Gray}}c}
\newcolumntype{b}{>{\columncolor{greyL}}c}
\algnewcommand\algorithmicforpara{\textbf{for}}
\algnewcommand\algorithmicdoinparallel{\textbf{do in parallel}}
\def\BibTeX{{\rm B\kern-.05em{\sc i\kern-.025em b}\kern-.08em
		T\kern-.1667em\lower.7ex\hbox{E}\kern-.125emX}}
\begin{document}

\title{Secure Retrieval-Augmented Generation against Poisoning Attacks

}

\author{\IEEEauthorblockN{
Zirui Cheng\IEEEauthorrefmark{2}\textsuperscript{*},
Jikai Sun\IEEEauthorrefmark{2}\textsuperscript{*},
Anjun Gao\IEEEauthorrefmark{5},
Yueyang Quan\IEEEauthorrefmark{3},
Zhuqing Liu\IEEEauthorrefmark{3}, 
Xiaohua Hu\IEEEauthorrefmark{4},
Minghong Fang\IEEEauthorrefmark{5}}
\IEEEauthorblockA{
\IEEEauthorrefmark{2}National University of Singapore, 
\IEEEauthorrefmark{5}University of Louisville, 
\IEEEauthorrefmark{3}University of North Texas, \IEEEauthorrefmark{4}Drexel University}
}

\maketitle

\makeatletter
\renewcommand{\@makefntext}[1]{\noindent#1}
\makeatother
\footnotetext{\textsuperscript{*}Equal contribution. Zirui Cheng and Jikai Sun conducted this research while they were interns under the supervision of Minghong Fang.}

\begin{abstract}	 
Large language models (LLMs) have transformed natural language processing (NLP), enabling applications from content generation to decision support. Retrieval-Augmented Generation (RAG) improves LLMs by incorporating external knowledge but also introduces security risks, particularly from data poisoning, where the attacker injects poisoned texts into the knowledge database to manipulate system outputs. While various defenses have been proposed, they often struggle against advanced attacks. To address this, we introduce RAGuard, a detection framework designed to identify poisoned texts. RAGuard first expands the retrieval scope to increase the proportion of clean texts, reducing the likelihood of retrieving poisoned content. It then applies chunk-wise perplexity filtering to detect abnormal variations and text similarity filtering to flag highly similar texts. This non-parametric approach enhances RAG security, and experiments on large-scale datasets demonstrate its effectiveness in detecting and mitigating poisoning attacks, including strong adaptive attacks.
\end{abstract}

\begin{IEEEkeywords}
	Retrieval-Augmented Generation, Poisoning Attacks, Robustness
\end{IEEEkeywords}


\section{Introduction} \label{sec:intro}

Large language models (LLMs)~\cite{brown2020language} have advanced natural language processing (NLP).
However, they often generate false or misleading information due to outdated or limited training data, which can be critical in domains like finance and healthcare. The Retrieval-Augmented Generation (RAG) framework~\cite{karpukhin2020dense} mitigates this issue by enriching LLMs with external knowledge. Upon receiving a query, RAG retrieves semantically relevant documents from a database and uses them to ground the LLM’s response, improving factual accuracy and contextual relevance.

Including external data sources introduces security vulnerabilities, leaving RAG systems exposed to poisoning attacks~\cite{zou2024poisonedrag}. Attackers can inject crafted malicious texts into the knowledge base, causing the system to generate attacker-specified responses for certain queries. Existing defenses fall into two main types: prevention and detection.
Prevention-based defenses~\cite{jain2023baseline,xiang2024certifiably} modify how queries are processed to reduce attack opportunities, such as using query paraphrasing to avoid triggering poisoned responses. Detection-based defenses~\cite{zhou2025trustrag} seek to identify and remove malicious texts from the knowledge base.
However, as shown in our experiments, these defenses often fail to effectively counter advanced poisoning attacks.

\myparatight{Our work}%
In this paper, we introduce \alg, a novel approach designed to effectively identify poisoned texts within RAG systems. Poisoning attacks in RAG involve the attacker inserting carefully crafted poisoned texts into the knowledge database, aiming to manipulate the system to produce specific attacker-controlled responses for particular queries. For such attacks to succeed, it is crucial that the poisoned texts appear in the retrieval phase, as the retrieval results directly influence the model’s generated responses.
Motivated by this critical requirement, we propose expanding the number of retrieved texts per query as a proactive defense strategy. This retrieval expansion approach is based on the insight that a larger retrieval set naturally includes a higher proportion of legitimate, benign texts, which significantly reduces the probability of poisoned texts dominating the retrieved results. By diluting the concentration of poisoned texts among the retrieved texts, our strategy effectively diminishes their potential influence on the RAG system’s output.

However, retrieval expansion alone may not fully eliminate the risk posed by poisoned content, especially if the poisoned texts are deliberately crafted to closely match query semantics. Therefore, we further propose a two-stage filtering strategy to robustly identify and remove poisoned texts from the expanded retrieval results. The motivation behind employing a two-stage filtering mechanism is derived from our empirical observation that poisoned texts often exhibit distinct linguistic irregularities compared to benign texts. In particular, poisoned texts typically show substantial discrepancies in perplexity scores between their two constituent chunks or display unusually high perplexity in at least one chunk, indicating unnatural linguistic patterns.
Furthermore, the attacker frequently attempts to maximize the likelihood of their poisoned texts being retrieved by embedding phrases that closely mirror the query content. To counteract this strategy, we incorporate an additional filtering step that specifically targets texts exhibiting abnormally high semantic similarity to the query. By systematically removing these suspiciously similar texts, our two-stage filtering strategy further enhances the accuracy and reliability of poisoned content detection.
Together, the expansion of the retrieval pool and the two-stage filtering approach constitute a cohesive defense framework. Our comprehensive approach integrates complementary strategies, including dilution of the proportion of poisoned texts and rigorous detection of subtle linguistic anomalies, to effectively safeguard RAG systems against poisoning attacks.

We comprehensively evaluate our method across five datasets and 6 poisoning attacks, including 2 advanced adaptive attacks where the attacker evades detection through stealthy strategies. We benchmark our approach against 6 established defense baselines and explore practical RAG scenarios, such as using different LLMs as evaluation backbones and applying diverse text similarity metrics. The main contributions are as follows:

\begin{list}{\labelitemi}{\leftmargin=1em \itemindent=-0.00em \itemsep=.2em}

\item We introduce \alg, an innovative defense strategy designed to counteract poisoning attacks on RAG systems.

\item We conduct a thorough empirical evaluation of \alg against a range of existing attacks, demonstrating that \alg is highly effective in identifying and detecting poisoned texts.

\item We develop tailored adaptive attacks targeting \alg and assess their impact. Our findings reveal that \alg maintains strong resilience, effectively defending against these adaptive attacks.

\end{list}


\section{Preliminaries and Related Work} \label{sec:related}

\subsection{Retrieval-Augmented Generation (RAG)}

RAG~\cite{karpukhin2020dense} combines retrieval with generative models to enhance knowledge-intensive NLP tasks. By accessing external knowledge, it improves understanding and response generation. A typical RAG system includes three main components: {\em a knowledge database}, {\em a retrieval module}, and {\em a large language model (LLM)}. For a user query $Q$, the RAG system generates a response in two phases.

\begin{list}{\labelitemi}{\leftmargin=1em \itemindent=-0.0em \itemsep=.2em}

\item \textbf{Phase I (Knowledge retrieval):} 
When a query $Q$ is received, the retriever selects the top-$k$ texts from the knowledge database $D$ by computing semantic similarity between $Q$ and all texts in $D$. The $k$ most similar texts are retrieved, denoted as $H(Q, D, k)$.

\item \textbf{Phase II (Response generation):} 
After retrieving the top-$k$ relevant texts, the LLM uses a predefined prompt to generate the response based on the user's query and retrieved texts. 
\end{list}

Fig.~\ref{RAG_fig} shows the process by which a RAG system generates a response for a user query.

\begin{figure}[!t]
	\centering
   	\includegraphics[scale = 0.25]{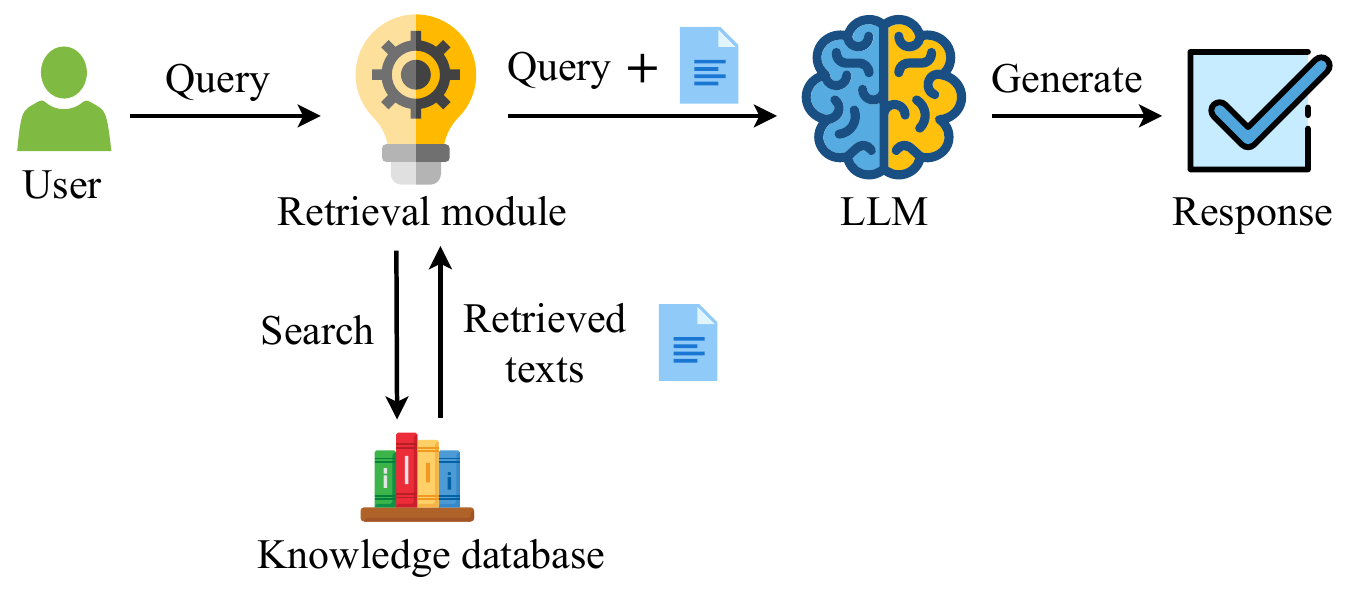}
	\caption{Illustration of the RAG process.}
        \label{RAG_fig}
		 \vspace{-.2in}
\end{figure}

\subsection{Poisoning Attacks and Defenses to RAG}

Although RAG systems enhance language models with external knowledge, they remain vulnerable to knowledge poisoning, where corrupted databases can manipulate outputs~\cite{zou2024poisonedrag,zhang2025practical,zhang2025benchmarking}.
Traditional defenses~\cite{barreno2010security,cretu2008casting,suciu2018does} are ineffective since RAG attacks target retrieval instead of training. Recent RAG-specific defenses~\cite{jelinek1980interpolated,jain2023baseline,xiang2024certifiably,alon2023detecting,liu2024formalizing,zhou2025trustrag} fall into prevention- and detection-based categories. Prevention approaches like query paraphrasing~\cite{jain2023baseline} or isolate-then-aggregate mechanisms~\cite{xiang2024certifiably} mitigate risk but do not identify poisoned texts, while detection-based methods~\cite{zhou2025trustrag} often miss subtle manipulations, leaving RAG systems vulnerable.
Note that we do not consider post-attack defenses such as forensic techniques~\cite{zhang2025traceback,zhang2025taught}.


\section{Problem Formulation} 

\subsection{Threat model} 

We adopt the threat model introduced in prior studies~\cite{zou2024poisonedrag}, where the attacker’s objective is to compromise the RAG system by injecting multiple poisoned texts into the knowledge database. These texts are carefully crafted so that, when a target query \( Q \) is issued, the system retrieves the poisoned content and generates a response aligned with the attacker’s intent.
A RAG system consists of three core components: a knowledge database, a retrieval module, and an LLM. We assume a realistic attack model where the attacker lacks access to the knowledge database's existing content, cannot query the LLM, and is unaware of its parameters, aligning with~\cite{zou2024poisonedrag}.
Based on the attacker’s knowledge of the retrieval module, attacks are categorized as white-box or black-box. In the white-box setting, the attacker fully knows the retrieval parameters, while in the black-box setting, they do not. Since the white-box scenario represents a stronger adversary, we focus on it to demonstrate our defense’s robustness against even the most advanced attacks.

\subsection{Defense objectives}

Our goal is to develop a reliable detection method to protect RAG systems from poisoning attacks.
The detection method should ensure \textit{preservation of system integrity}, maintaining the RAG system’s performance in benign conditions by avoiding false flags on harmless texts that could otherwise affect response quality and functionality. It must also achieve \textit{detection precision}, accurately identifying poisoned texts within the knowledge database and effectively distinguishing them from benign ones with minimal errors, which is particularly critical when attackers attempt to disguise poisoned content as benign. Furthermore, the method should guarantee \textit{computational efficiency}, introducing minimal computational overhead so that resource demands remain comparable to a standard RAG system.
Notably, the defender lacks prior information about the attacker, including their methods for crafting poisoned texts or the exact number of poisoned texts in the knowledge database.


\section{Our \alg} 
\label{our_method}

\subsection{Overview}

In poisoning attacks on RAG systems, the attacker injects carefully crafted poisoned texts into the knowledge database to influence the system's response to specific queries. For the attack to work, these poisoned texts must be among the texts retrieved in Phase I of the RAG process. To counter this, we propose a strategy that increases the proportion of clean texts in the retrieved set, thereby reducing the chance of retrieving poisoned content. 
Our \alg method achieves this by expanding the retrieval scope to include more than the typical top-$k$ texts, retrieving the top-$N$ texts instead, where $N > k$. This increases the likelihood of retrieving benign texts and reduces the impact of attacks.

Once the retrieval set is expanded, we introduce a two-stage filtering strategy aimed at detecting and removing poisoned texts from such enlarged set of candidates. Let \(\hat{D}\) denote the knowledge database after the injection of multiple poisoned texts into the original clean database \(D\). For a given query \(Q\), we use \(H(Q, \hat{D}, N)\) to represent the top-\(N\) retrieved texts from \(\hat{D}\). Our goal is to determine, for each retrieved text \(R \in H(Q, \hat{D}, N)\), whether it is benign or has been manipulated by the attacker. The following sections describe in detail how our two-stage filtering mechanism systematically assesses and filters these retrieved texts.

\subsection{Chunk-wise perplexity filtering} 
\label{Chunk_wise_filtering}

A straightforward way to determine whether a text \( R  \) in $H(Q, \hat{D}, N)$ is poisoned is by evaluating its perplexity score, which measures text quality. A lower perplexity score indicates higher quality and a greater likelihood of the text being benign, while a higher perplexity score suggests lower quality and a higher probability of being poisoned. The perplexity score for a given text \( R \) is computed as follows:
\begin{align}
f(R) = -\frac{1}{|R|}\sum_{r \in R}\log p(r_i|r_{0:i-1}),
\end{align}
where $R \in H(Q, \hat{D}, N)$, \( |R| \) represents the length of the text, i.e., the total number of words in \( R \). The term \( r_i \) denotes the word at the \( i \)-th position in the text, while \( r_{0:i-1} \) refers to the sequence of words from the beginning of the text up to, but not including, the \( i \)-th word. Lastly, \( p(r_i \mid r_{0:i-1}) \) is the probability of the word \( r_i \) occurring, given the preceding sequence \( r_{0:i-1} \), as estimated by the language model.

However, as our experiments later demonstrate, computing the perplexity score for an entire text~\cite{liu2024formalizing,jelinek1980interpolated} proves ineffective. To overcome this limitation, we introduce a chunk-based approach. 
This approach is motivated by the observation that poisoned texts often exhibit significant fluctuations in perplexity across different sections due to their lack of fluency and coherence, whereas benign texts maintain more consistent perplexity scores. 
To capture these variations, we divide each text in the set \(H(Q, \hat{D}, N)\) into two chunks, either by approximately splitting the text evenly or using punctuation marks~\cite{izacard2020leveraging,khattab2020colbert}. 
Our experiments confirm that a two-chunk split is sufficient.
For a given text \(R\), we denote the first and second chunks as \(R^\text{pre}\) and \(R^\text{post}\), respectively, such that \(R=R^\text{pre} \oplus R^\text{post}\).

This chunk-based approach enables us to capture variations in perplexity within a text, which can be indicative of poisoning. To quantify this variation, we introduce the Perplexity Difference (PD) score, which measures the difference in perplexity between the two chunks. Formally, it is defined as:
\begin{align}
\label{pd_score}
\text{PD}(R) = f(R^\text{pre}) - f(R^\text{post}).
\end{align}  

Since \( \text{PD}(R) \) can exhibit both high and low values depending on the text structure, we employ a non-parametric detection mechanism that utilizes the empirical distribution of PD scores rather than relying on parametric assumptions. Specifically, we randomly sample a subset of texts \( \mathcal{S} \) from the knowledge database and compute \( \text{PD}(\upsilon) \) for each \( \upsilon \in \mathcal{S} \). Using these scores, we determine the empirical percentile thresholds to establish a rejection region for hypothesis testing. Specifically, we formulate the
following two hypotheses:
\begin{list}{\labelitemi}{\leftmargin=1em \itemindent=-0.0em \itemsep=.2em}
\item \textbf{Null hypothesis \( H_0 \)}: The text \( R \) is not poisoned, and its PD score \( \text{PD}(R) \) falls within the range observed in \( \mathcal{S} \).

\item \textbf{Alternative hypothesis \( H_1 \)}: The text \( R \) is poisoned, and its PD score significantly deviates from the PD scores of texts in \( \mathcal{S} \).

\end{list}

Let \( Q_{\alpha}^{\text{PD}} \) and \( Q_{1-\alpha}^{\text{PD}} \) denote the \(\alpha\)-th and \((1-\alpha)\)-th percentiles of the PD scores in \( \mathcal{S} \), respectively. A text is classified as poisoned if its PD score falls within this region:  
\begin{align}
\label{pd_condition}
  \!\!\!  \text{PD}(R) \geq Q_{1-\alpha}^{\text{PD}} \quad \text{or} \quad \text{PD}(R) \leq Q_{\alpha}^{\text{PD}}.
\end{align}

If Eq.~(\ref{pd_condition}) holds, we reject the null hypothesis \( H_0 \), and classify \( R \) as poisoned. 
The significance level \( \alpha \) controls the sensitivity of the threshold. For instance, setting \( \alpha = 2.5\% \) ensures that at most 2.5\% of benign texts fall outside the empirical percentile range due to random variation. This distribution-free approach aligns with the empirical distribution of PD scores, eliminating the need for specific assumptions, such as a Gaussian distribution, while preserving effective detection performance.

However, relying solely on the PD score may not always effectively detect a poisoned text \( R \), as both \( f(R^\text{pre}) \) and \( f(R^\text{post}) \) can be large, potentially masking the poisoning by producing a lower PD score.  
To address this limitation, we introduce an additional metric called the Perplexity Maximum (PM) score, which captures the maximum perplexity between the two chunks of a text \( R \). Formally, the PM score \( \text{PM}(R) \) is defined as:  
\begin{align}
\label{pm_score}
\text{PM}(R) = \max(f(R^\text{pre}), f(R^\text{post})).
\end{align}  

A high PM score suggests that at least one chunk of \( R \) is of significantly lower quality. To determine whether \( \text{PM}(R) \) indicates poisoning, we again apply a hypothesis testing approach based on empirical percentile thresholds.  
Specifically, for each text $v$ in the selected set \( \mathcal{S} \), we compute \( \text{PM}(\upsilon) \). Instead of fitting a parametric model, we estimate the upper empirical percentile of the PM scores in \( \mathcal{S} \) to define the rejection threshold. Let \( Q_{1-\alpha}^{\text{PM}} \) denote the \((1-\alpha)\)-th percentile of the PM scores in \( \mathcal{S} \), then a text \( R \) is flagged as poisoned if:  
\begin{align}
\label{pm_condition}
\text{PM}(R) \geq Q_{1-\alpha}^{\text{PM}}.
\end{align}  

Note that since a high PM score indicates that at least one chunk of \( R \) exhibits unusually high perplexity, we thus apply only a one-tailed test. Low PM scores are not a concern, as they suggest both chunks fall within the normal perplexity range, characteristic of benign texts.

\subsection{Text similarity filtering}

While chunk-wise perplexity filtering helps ensure that the top-\(N\) retrieved texts are coherent, it may still miss poisoned texts designed to appear natural. The attacker can craft such texts to evade detection based solely on perplexity.  %
To address this, we introduce a text similarity-based filter. In poisoning attacks, injected texts must closely resemble the query \( Q \) to increase their retrieval chances. Techniques like the HotFlip algorithm~\cite{ebrahimi2017hotflip} are commonly used to achieve this effect. As a result, poisoned texts tend to exhibit {\em unusually high similarity} to \( Q \) compared to benign texts, which typically have moderate similarity scores.  
Specifically, 
for a given query \( Q \), we compute its embedding \( E(Q) \) and calculate similarity scores for each retrieved text \( R \) using:  
\begin{align}
\label{ts_score}
    \text{TS}(R) = \text{Sim}(E(Q), E(R)),
\end{align}  
where \( \text{Sim}() \) is a similarity function (e.g., cosine similarity). 

Let \( Q_{1-\alpha}^{\text{TS}} \) denote the \((1-\alpha)\)-th percentile of text similarity scores in \( \mathcal{S} \), then a text \( R \) is classified as poisoned if:  
\begin{align}
\label{ts_condition}
\text{TS}(R) \geq Q_{1-\alpha}^{\text{TS}}.
\end{align}  

This one-tailed test focuses only on high similarity scores, as the retrieval system selects texts with the highest similarity in Phase I. By adding this similarity-based filter, we improve the detection of poisoned texts that might pass undetected by perplexity alone, enhancing the RAG system’s robustness against attacks.

\begin{algorithm}[t]
	\caption{\alg.}
	\label{our_alg}
	\begin{algorithmic}[1]
		\renewcommand{\algorithmicrequire}{\textbf{Input:}}
		\renewcommand{\algorithmicensure}{\textbf{Output:}}
		\Require Knowledge database $\hat{D}$, query $Q$; parameters $k$, $N$.
		\Ensure Final response to query \( Q \). 
		
		\State  Select a random subset of texts, denoted as $\mathcal{S}$, from  $\hat{D}$.

		\State  Retrieve the top-\( N \) texts most relevant to query \( Q \) based on Phase I of the RAG system, forming the set \( H(Q, \hat{D}, N) \).
		
		\For{each text $R$ in \( H(Q, \hat{D}, N) \)}
		
		\State  Split $R$ into two chunks, denoted as \(R^\text{pre}\) and \(R^\text{post}\).
		
		\State Calculate $\text{PD}(R)$, $\text{PM}(R)$, and $\text{TS}(R)$ using Eq.~(\ref{pd_score}), Eq.~(\ref{pm_score}), and Eq.~(\ref{ts_score}), respectively.
		\If {any of the criteria specified in Eq.~(\ref{pd_condition}), Eq.~(\ref{pm_condition}), or Eq.~(\ref{ts_condition}) are satisfied}
		
		\State Classify text $R$ as poisoned, and remove $R$ from the set \( H(Q, \hat{D}, N) \).
		\EndIf
		\EndFor
		
		\If {the set \( H(Q, \hat{D}, N) \) consists of \( k \) texts or fewer}
		\State Pass the texts in set \( H(Q, \hat{D}, N) \) to Phase II, allowing the LLM to generate the final response to query \( Q \).
		\ElsIf {\( H(Q, \hat{D}, N) \) contains more than \( k \) texts}
		\State Choose the top-\( k \) texts from the set \( H(Q, \hat{D}, N) \) and provide them to Phase II, enabling the LLM to generate the final response to query \( Q \).
		\EndIf
	\end{algorithmic}
\end{algorithm}

Algorithm~\ref{our_alg} shows the pseudocode of our proposed \alg.
For a given query \( Q \), we begin by retrieving the top-\( N \) relevant texts in Phase I of the RAG system, forming the set \( H(Q, \hat{D}, N) \). Each text \( R \) in the set \( H(Q, \hat{D}, N) \) is then split into two chunks, and we calculate its PD, PM, and TS scores as per Eq.~(\ref{pd_score}), Eq.~(\ref{pm_score}), and Eq.~(\ref{ts_score}). If any of the conditions in Eq.~(\ref{pd_condition}), Eq.~(\ref{pm_condition}), or Eq.~(\ref{ts_condition}) are met, \( R \) is identified as poisoned and removed from \( H(Q, \hat{D}, N) \). 
If \( H(Q, \hat{D}, N)\) contains \( k \) or fewer texts after filtering, these are passed to Phase II for the LLM to generate the final response. If \( H(Q, \hat{D}, N) \) still has more than \( k \) texts, only the top-\( k \) are selected, as RAG systems typically limit input to the most relevant \( k \) texts due to LLM token constraints, which avoids truncation of important content~\cite{karpukhin2020dense,lewis2020retrieval,izacard2020leveraging,guu2020retrieval}. If filtering results in an empty set \(H(Q, \hat{D}, N) \), we retrieve a larger set (e.g., top-\( 2N \)) in Phase I to ensure sufficient input.

\subsection{Formal Security Analysis}

In this section, we present the theoretical security analysis of RAGuard. Let output accuracy (OACC) be the fraction of target queries correctly answered after applying RAGuard. Theorem~\ref{theorem_1} shows that if less than half of the poisoned texts bypass filtering, RAGuard guarantees a lower bound on OACC, ensuring reliable performance under poisoning attacks.

\begin{thm}%
	\label{theorem_1}
	Suppose the RAG system retrieves a final set of $k$ texts after applying \alg filtering, and assume the LLM answers correctly whenever more than half of these $k$ texts are benign. Let $\rho$ be the fraction of poisoned texts in the knowledge database, and let $\beta_{\mathrm{PD}}, \beta_{\mathrm{PM}}, \beta_{\mathrm{TS}}$ be the false negative rates of the Perplexity Difference, Perplexity Maximum, and Text Similarity filters. Since a poisoned text must bypass all filters to survive, its survival probability is bounded by
	$
	\beta_{\mathrm{total}} \;\leq\; \beta_{\mathrm{PD}} \cdot \beta_{\mathrm{PM}} \cdot \beta_{\mathrm{TS}}.
	$
	Define the effective poisoned fraction after filtering as $\rho \beta_{\mathrm{total}}$, i.e., the probability that a retrieved text is poisoned and survives all \alg filters. If $\rho \beta_{\mathrm{total}} < \tfrac{1}{2}$, then the output accuracy (OACC) of the defended RAG system satisfies:
	\[
	\mathrm{OACC} \;\geq\; 1 - \exp(-c k),
	\]
	where
	$
	c = \tfrac{1}{3}\,(\tfrac{1}{2} - \rho \beta_{\mathrm{total}})^2 \rho \beta_{\mathrm{total}}.
	$
\end{thm}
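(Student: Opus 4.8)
The plan is to reduce the statement to a standard Chernoff tail bound for a binomial random variable. First I would set $p := \rho\,\beta_{\mathrm{total}}$, where by the product bound in the statement $\beta_{\mathrm{total}} \le \beta_{\mathrm{PD}}\,\beta_{\mathrm{PM}}\,\beta_{\mathrm{TS}}$; this $p$ is the probability that a given one of the $k$ texts in the final retrieved set is poisoned (probability $\rho$) \emph{and} survives all three filters (each with the stated false-negative rate). Modeling the $k$ final slots as independently bad with probability at most $p$, let $X$ denote the number of surviving poisoned texts among them, so that $X$ is stochastically dominated by $\mathrm{Bin}(k,p)$. By hypothesis the LLM answers a target query correctly whenever strictly more than $k/2$ of the $k$ texts are benign, i.e.\ whenever $X < k/2$; hence
\[
\mathrm{OACC} \;\ge\; \Pr\!\big[X < k/2\big] \;=\; 1 - \Pr\!\big[X \ge k/2\big],
\]
and it suffices to prove $\Pr[X \ge k/2] \le \exp(-ck)$.

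Next I would apply the multiplicative Chernoff bound to the upper tail. Writing the mean as $\mu = kp$ and the threshold as $k/2 = (1+\delta)\mu$, we get $1+\delta = \tfrac{1}{2p}$, so $\delta = \tfrac{1-2p}{2p}$, which is strictly positive exactly because the hypothesis $\rho\beta_{\mathrm{total}} = p < \tfrac12$ holds. Using $\Pr[X \ge (1+\delta)\mu] \le \exp\!\big(-\tfrac{\delta^2\mu}{2+\delta}\big)$ and substituting, the exponent simplifies (routine algebra, using $1-2p = 2(\tfrac12-p)$) to
\[
\frac{\delta^2\mu}{2+\delta} \;=\; \frac{k(1-2p)^2}{2(1+2p)} \;=\; \frac{2k\,(\tfrac12-p)^2}{1+2p}.
\]
Since $0<p<\tfrac12$ forces $1+2p<2$ and $p/3<1$, this exponent is at least $k(\tfrac12-p)^2 \ge \tfrac{k}{3}(\tfrac12-p)^2 p = ck$. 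Therefore $\Pr[X\ge k/2] \le \exp(-ck)$, and combining with the previous display gives $\mathrm{OACC} \ge 1-\exp(-ck)$, as claimed. (The Chernoff step in fact yields a strictly larger exponent; the stated constant $c$ is a convenient weakening that keeps the dependence on $\rho\beta_{\mathrm{total}}$ transparent.)

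The genuinely delicate part is not the concentration inequality, which is textbook, but justifying the probabilistic model that licenses it: treating the $k$ surviving texts as i.i.d.\ ``poisoned-and-surviving with probability $\rho\beta_{\mathrm{total}}$.'' Retrieval is deterministic given the database, the filter thresholds are estimated from the random subset $\mathcal{S}$, and top-$k$ truncation couples the slots, so this probability is really an average over the randomness in $\mathcal{S}$ and in the composition of $\hat D$. I would make this rigorous by (i) defining $\beta_{\mathrm{PD}},\beta_{\mathrm{PM}},\beta_{\mathrm{TS}}$ as per-text false-negative probabilities under that randomness and invoking independence of the three tests to obtain $\beta_{\mathrm{total}}\le\beta_{\mathrm{PD}}\beta_{\mathrm{PM}}\beta_{\mathrm{TS}}$; and (ii) either positing independence across the $k$ retrieved slots or, more carefully, replacing it with a negative-association / stochastic-domination argument so the Chernoff bound still applies to $X$. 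Once that modeling assumption is in place, the two displays above complete the proof.
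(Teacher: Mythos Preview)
Your proposal is correct and follows essentially the same route as the paper: model the number of surviving poisoned texts among the final $k$ as stochastically dominated by $\mathrm{Bin}(k,\rho\beta_{\mathrm{total}})$, apply a multiplicative Chernoff bound at the threshold $k/2$, and then weaken the resulting exponent to the stated constant $c$. The only cosmetic difference is that you invoke the $\exp(-\delta^{2}\mu/(2+\delta))$ form of Chernoff (valid for all $\delta>0$) rather than the paper's $\exp(-\delta^{2}\mu/3)$ form, and you are more explicit than the paper about the independence assumptions underlying the binomial model; the core argument is identical.
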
%

\begin{proof}
   The proof is relegated to Appendix~\ref{sec:appendix_1}.
\end{proof}


\section{Experiments} \label{sec:exp}

\subsection{Experimental Setup}

\subsubsection{Datasets}

We evaluate our \alg on five datasets. The Natural Questions (NQ)~\cite{kwiatkowski2019natural} dataset contains 2,681,468 queries and answers, while MS-MARCO~\cite{nguyen2016ms} includes 8,841,823 queries with related texts. HotpotQA~\cite{yang2018hotpotqa} provides 113,000 QA pairs for multi-hop reasoning and interpretability evaluation. We also construct two extended datasets, Extended NQ (ENQ) and Extended MS-MARCO (EMS-MARCO), by rewriting NQ and MS-MARCO texts with an LLM, resulting in 2,686,468 and 8,846,823 entries, respectively.

\subsubsection{Poisoning Attacks} 

To evaluate the robustness of our \alg, we use four attacks by default, including Prompt injection attack~\cite{liu2024formalizing}, General trigger attack~\cite{chaudhari2024phantom}, Jamming attack~\cite{shafran2024machine}, and PoisonedRAG attack~\cite{zou2024poisonedrag}.

\subsubsection{Compared Detection Methods}

By default, we evaluate the performance of \alg against two baseline detection methods: Perplexity (PPL)~\cite{jelinek1980interpolated} and PPL window~\cite{liu2024formalizing}. To provide a more comprehensive analysis, we also compare \alg with a more advanced detection-based method, TrustRAG~\cite{zhou2025trustrag}, as well as three prevention-based methods: paraphrasing~\cite{jain2023baseline}, duplicate text filtering~\cite{weis2004detecting}, and RobustRAG~\cite{xiang2024certifiably}, in Section~\ref{sec:discussion}.

\subsubsection{Evaluation Metrics}

We used four key metrics to evaluate the performance and effectiveness of our defense mechanism. For the following four metrics, higher DACC and OACC, along with lower FPR and FNR, indicate better defense performance.

\myparatight{Detection accuracy (DACC)} DACC was used to measure the precision of our defense in detecting poisoned texts. It is defined as the proportion of correctly classified samples (both positive and negative) out of the total number of samples.

\myparatight{False positive rate (FPR)} FPR measures the rate at which clean/benign texts are incorrectly classified as poisoned.

\myparatight{False negative rate (FNR)} FNR measures the proportion of poisoned texts that the model fails to detect.

\myparatight{Output accuracy (OACC)}OACC represents the proportion of target queries for which the LLM generates correct answers. A higher OACC reflects greater accuracy and effectiveness of the model in producing reliable responses.

\subsubsection{Target Queries and Target Answers}Following~\cite{zou2024poisonedrag}, we use the same 100 target queries and answers for each dataset. 
Note that in line with ~\cite{zou2024poisonedrag}, all 100 target queries are close-ended questions; we do not include any open-ended questions.

\begin{table}[t]
	\centering
	\tiny
	\addtolength{\tabcolsep}{-5.56pt}
	\caption{Detection results of different methods. Larger (\textcolor{blue}{$\uparrow$}) DACC and OACC, and smaller (\textcolor{blue}{$\downarrow$}) FPR and FNR, indicate better detection performance.}
	\label{main_results}%
	\renewcommand\arraystretch{1.1}
	\begin{tabular}{|c|c|cccc|cccc|cccc|}
			\hline
			\multirow{3}{*}{Dataset} & \multirow{3}{*}{Attack} & \multicolumn{12}{c|}{Detection method} \\
			\cline{3-14}          &       & \multicolumn{4}{c|}{PPL}      & \multicolumn{4}{c|}{PPL window} & \multicolumn{4}{c|}{\alg} \\
			\cline{3-14}         
			&       & DACC\textcolor{blue}{$\uparrow$} & FPR\textcolor{blue}{$\downarrow$}   & FNR\textcolor{blue}{$\downarrow$}   &OACC\textcolor{blue}{$\uparrow$} & DACC\textcolor{blue}{$\uparrow$}& FPR\textcolor{blue}{$\downarrow$} & FNR\textcolor{blue}{$\downarrow$}& OACC\textcolor{blue}{$\uparrow$}& DACC\textcolor{blue}{$\uparrow$}& FPR\textcolor{blue}{$\downarrow$} & FNR\textcolor{blue}{$\downarrow$} & OACC\textcolor{blue}{$\uparrow$} \\
			\hline
			\hline
			\multirow{5}{*}{NQ} & No attack &       NA&       0.197&       NA&       1.000&       NA&       0.136&       NA&       1.000&       NA&       0.043&       NA&  1.000\\
			& Prompt injection &       0.649&       0.164&       0.351&       0.695&       0.810&       0.076&       0.190&       0.857&       0.921&       0.057&       0.079&  0.982\\
			& General trigger &       0.638&       0.159&       0.362&       0.786&       0.803&       0.082&       0.197&       0.849&       0.916&       0.054&       0.084&  0.978\\
			& Jamming &       0.756&       0.116&       0.244&       0.806&       0.814&       0.129&       0.186&       0.865&       0.988&       0.098&       0.002&  1.000\\
			& PoisonedRAG &       0.593&       0.697&       0.407&       0.659&       0.708&       0.204&       0.292&       0.807&       0.962&       0.028&       0.038&  0.999\\
			\hline
			\multirow{5}{*}{{\makecell {MS-\\MARCO}}} & No attack &       NA&       0.205&       NA&       0.998&       NA&       0.124&       NA&       1.000&       NA&       0.025&       NA&  1.000\\
			& Prompt injection &       0.615&       0.162&       0.385&       0.713&       0.758&       0.054&       0.242&       0.852&       0.927&       0.048&       0.073&  0.992\\
			& General trigger &       0.622&       0.173&       0.378&       0.726&       0.761&       0.058&       0.239&       0.863&       0.921&       0.029&       0.079&  0.986\\
			& Jamming &       0.734&       0.074&       0.266&       0.831&       0.829&       0.089&       0.168&       0.873&       0.999&       0.034&       0.001&  0.999\\
			& PoisonedRAG &       0.584&       0.396&       0.416&       0.668&       0.715&       0.197&       0.285&       0.819&       0.924&       0.039&       0.076&  0.973\\
			\hline
			\multirow{5}{*}{HotpotQA} & No attack &       NA&       0.217&       NA&       0.990&       NA&       0.085&       NA&       0.996&       NA&       0.063&       NA&  0.998\\
			& Prompt injection &       0.638&       0.199&       0.362&       0.835&       0.831&       0.083&       0.169&       0.859&       0.918&       0.053&       0.082&  0.971\\
			& General trigger &       0.616&       0.135&       0.384&       0.828&       0.817&       0.086&       0.183&       0.867&       0.915&       0.047&       0.085&  0.970\\
			& Jamming &       0.721&       0.167&       0.279&       0.831&       0.832&       0.098&       0.168&       0.872&       0.999&       0.080&       0.001&  0.999\\
			& PoisonedRAG &       0.413&       0.704&       0.587&       0.515&       0.726&       0.193&       0.274&       0.823&       0.941&       0.099&       0.059&  0.988\\
			\hline
			\multirow{5}{*}{ENQ} & No attack &       NA&       0.164&       NA&       0.996&       NA&       0.080&       NA&       1.000&       NA&       0.044&       NA&  1.000\\
			& Prompt injection &       0.654&       0.156&       0.346&       0.702&       0.804&       0.079&       0.196&       0.837&       0.924&       0.038&       0.076&  0.984\\
			& General trigger &       0.647&       0.159&       0.353&       0.698&       0.806&       0.077&       0.194&       0.849&       0.920&       0.037&       0.080&  0.979\\
			& Jamming &       0.753&       0.109&       0.247&       0.806&       0.815&       0.132&       0.185&       0.868&       0.998&       0.066&       0.002&  1.000\\
			& PoisonedRAG &       0.627&       0.448&       0.373&       0.715&       0.711&       0.135&       0.289&       0.816&       0.968&       0.019&       0.032&  0.999\\
			\hline
			\multirow{5}{*}{{\makecell {EMS-\\MARCO}}} & No attack &       NA&       0.137&       NA&       0.998&       NA&       0.049&       NA&       0.999&       NA&       0.023&       NA&  1.000\\
			& Prompt injection &       0.608&       0.114&       0.368&       0.615&       0.764&       0.046&       0.239&       0.858&       0.928&       0.032&       0.073&  0.992\\
			& General trigger &       0.618&       0.125&       0.381&       0.722&       0.756&       0.043&       0.244&       0.857&       0.921&       0.019&       0.079&  0.987\\
			& Jamming &       0.738&       0.056&       0.243&       0.833&       0.835&       0.058&       0.155&       0.882&       0.999&       0.022&       0.001&  0.999\\
			& PoisonedRAG &       0.576&       0.297&       0.423&       0.664&       0.709&       0.142&       0.267&       0.802&       0.924&       0.026&       0.077&  0.978\\
			\hline
		\end{tabular}
		\vspace{-0.15in}
	\end{table}

\subsubsection{Parameter Settings}
Following~\cite{zou2024poisonedrag}, the attacker injects 5 poisoned texts per target query during attacks and used GPT-3.5 as the final evaluation model. 
To ensure efficient evaluation of the model, we predominantly use the GPT-2 model for perplexity calculations and use the dot product method to compute text similarity. The Contriver~\cite{izacard2021unsupervised} model acts as a retrieval module in RAG, responsible for identifying highly relevant texts for each query.
By default, \( k \) is set to 5 following~\cite{zou2024poisonedrag}; in our \alg, we set \( N = 3k \). In \alg, we randomly sample 1000 texts from the knowledge database to form the set \( \mathcal{S} \).
The significance level is set \( \alpha = 2.5\% \).

\begin{figure*}[htbp]
	\center
\subfloat{\includegraphics[width=0.23\textwidth]{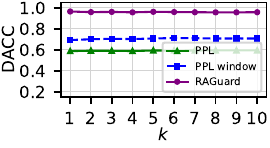}}\hspace{0.005\textwidth}
\subfloat{\includegraphics[width=0.23\textwidth]{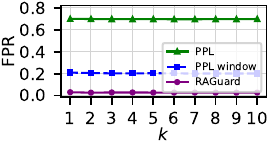}}\hspace{0.005\textwidth}
\subfloat{\includegraphics[width=0.23\textwidth]{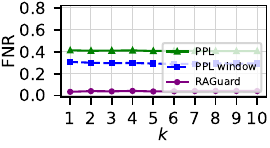}}\hspace{0.005\textwidth}
\subfloat{\includegraphics[width=0.23\textwidth]{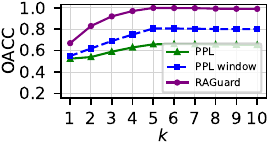}}
	\caption{Impact of $k$ on NQ dataset.}
	\label{fig:Impact of k}
      \vspace{-0.05in}
\end{figure*}

\begin{figure*}[htbp]
	\center
	\subfloat{\includegraphics[width=0.23\textwidth]{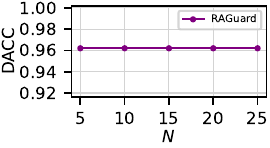}}\hspace{0.005\textwidth}
	\subfloat{\includegraphics[width=0.23\textwidth]{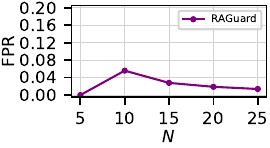}}\hspace{0.005\textwidth}
	\subfloat{\includegraphics[width=0.23\textwidth]{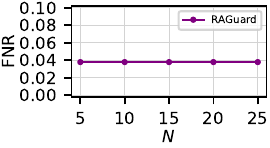}}\hspace{0.005\textwidth}
	\subfloat{\includegraphics[width=0.23\textwidth]{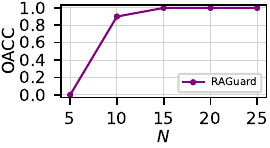}}
	\caption{Impact of $N$ on NQ dataset.}
	\label{fig:Impact of N}
      \vspace{-0.15in}
\end{figure*}

\begin{figure*}[htbp]
	\center
	\subfloat{\includegraphics[width=0.23\textwidth]{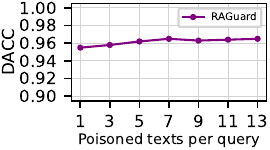}}\hspace{0.005\textwidth}
	\subfloat{\includegraphics[width=0.23\textwidth]{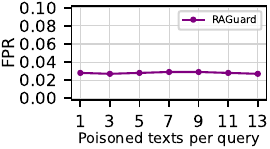}}\hspace{0.005\textwidth}
	\subfloat{\includegraphics[width=0.23\textwidth]{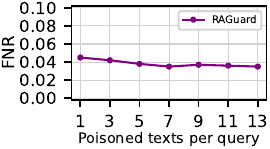}}\hspace{0.005\textwidth}
	\subfloat{\includegraphics[width=0.23\textwidth]{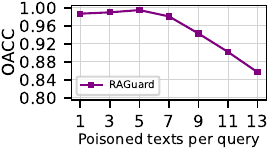}}
	\caption{Impact of poisoned texts per query on NQ dataset.}
	\label{fig:Impact of p}
      \vspace{-0.15in}
\end{figure*}

\begin{figure}[htbp]
    \centering
    \includegraphics[width=0.7\linewidth]{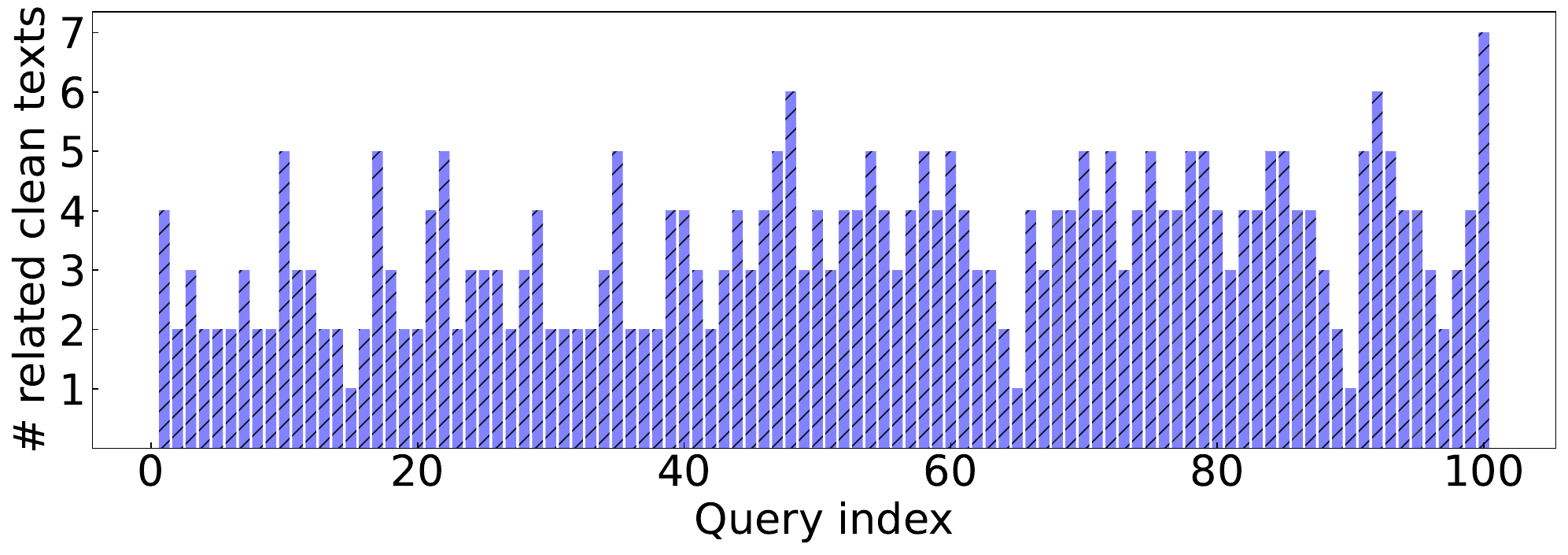}
    \caption{Number of related clean texts of each query.}
    \label{fig:related_query}
  \vspace{-0.15in}
\end{figure}

\subsection{Experimental Results}

\myparatight{\alg achieves satisfactory detection performance}Table~\ref{main_results} presents the defense effects of three methods against four attacks across five datasets. ``NA'' denotes not applicable. 
Key observations from the results include:
First of all, \alg can defend against various attacks, and achieves high DACC and OACC. 
For instance, during the PoisonedRAG attack, \alg achieves DACC scores of 96.2\% on NQ, 92.4\% on MS-MARCO, 94.1\% on HotpotQA, 96.8\% on ENQ, and 92.4\% on EMS-MARCO, with OACC nearing 100\% across all datasets. In addition, \alg also achieves a very low FNR under various attacks on all datasets. Our \alg outperforms the PPL and PPL window methods with about a 30\% and 20\% increase in DACC, respectively, and about a 15\% improvement in OACC over both methods.

Secondly, \alg achieves a low FPR, typically below 6\% across datasets, indicating it rarely removes clean texts by mistake. Although \alg may occasionally delete some clean content, this content is generally irrelevant or of low quality, minimally impacting the LLM’s response accuracy. Consequently, removing detected poisoned texts leads to a higher OACC. In contrast, existing methods (PPL, PPL window) remove more poisoned texts but also mistakenly delete a large amount of clean content, making \alg more effective.

\myparatight{Impact of $k$}This section explores how varying \( k \) impacts the defense effectiveness of different detection methods. 
Note that in all ablation study experiments, we consider only the PoisonedRAG attack by default, as it has proven to be more effective in compromising RAG~\cite{zou2024poisonedrag}.
Fig.~\ref{fig:Impact of k} shows results on the NQ dataset.
The figures reveal that increasing \( k \) has minimal impact on \alg's DACC, FPR, or FNR but does affect OACC, indicating that while \alg's filtering remains consistent, the LLM's accuracy improves with more input texts. In our experiments, setting \( k \geq 5 \) allows the LLM to reliably produce correct answers post-filtering, highlighting the importance of sufficient clean input for accuracy, especially against poisoning attacks.

\myparatight{Impact of $N$}We analyze the impact of \( N \) on \alg's effectiveness. Fig.~\ref{fig:Impact of N} presents the results for the NQ dataset.
We observe that adjusting $N$ has minimal impact on DACC and FNR, which remain stable. However, FPR initially rises and then decreases as $N$ increases, while OACC steadily improves and eventually stabilizes. This suggests that while $N$ has little effect on DACC and FNR, optimizing it can reduce false positives and enhance overall accuracy by increasing the pool of retrieved texts for \alg to process.

\myparatight{Impact of poisoned texts per query}%
We examine how varying the number of poisoned texts per query affects the performance of \alg. Results on NQ dataset are presented in Fig.~\ref{fig:Impact of p}. Note that PPL and PPL window methods are excluded from this analysis, as they consistently fail to detect poisoned texts (see Table~\ref{main_results}).
We observe that as the number of poisoned texts per query increases, the DACC, FPR, and FNR of \alg remain largely stable. However, OACC begins to decline when the number of poisoned texts exceeds 7 per query. This performance drop is attributable to a lack of sufficient clean texts: as shown in Fig.~\ref{fig:related_query}, most queries in the knowledge database contain no more than 5 relevant clean texts. Consequently, once the number of poisoned texts surpasses this threshold, the probability that the retrieval step includes mostly or only poisoned content increases. In our experimental setup, following~\cite{zou2024poisonedrag}, the attacker injects 5 poisoned texts per query into the knowledge database.
Note that injecting such a large number of poisoned texts per query is impractical in real-world scenarios. Knowledge databases are often monitored, and the sudden presence of excessive anomalous entries is likely to raise suspicion. Furthermore, generating a large volume of poisoned texts imposes considerable costs on the attacker, reducing the feasibility and stealth of such attacks~\cite{zou2024poisonedrag}.

\myparatight{Impact of similarity metrics}In Phase I of RAG, knowledge retrieval measures text similarity by comparing the embedding of two texts. Here, we examine how different similarity metrics, such as dot product and cosine similarity, affect the performance of our \alg. The results are shown in Table~\ref{similarity_metrics}. 
We observe that \alg achieves nearly equivalent performance regardless of which similarity metric is applied. Both the dot product metric and the cosine product metric lead to comparable results in terms of text similarity calculations, demonstrating that \alg is robust to the choice of metric in this context.

\myparatight{Impact of different LLMs on computing the perplexity score}In our method, we split the query into two parts and calculate the perplexity score for each separately. This section examines how different LLMs impact perplexity scoring in \alg, with results presented in Table~\ref{perplexity calculation}. We find that GPT-2 performs best, while Bloomz-560M is slightly less effective, and Facebook-1.3B performs the worst. These results suggest a positive correlation between the number of LLM parameters and effectiveness of \alg in perplexity scoring, as models with more parameters, like GPT-2, generally yield better results.

\begin{table}[t]
  \centering
      \addtolength{\tabcolsep}{-2.59pt}
          \tiny
  \caption{Detection results of \alg  with various similarity metrics.}
    \label{similarity_metrics}%
         \renewcommand\arraystretch{1.1}
    \begin{tabular}{|c|c|cccc|cccc|}
    \hline
    \multirow{3}{*}{Dataset} & \multirow{3}{*}{Attack} & \multicolumn{8}{c|}{Similarity} \\
\cline{3-10}          &       & \multicolumn{4}{c|}{Dot Product} & \multicolumn{4}{c|}{Cosine} \\
\cline{3-10}          &       & DACC\textcolor{blue}{$\uparrow$}  & FPR\textcolor{blue}{$\downarrow$}   & FNR\textcolor{blue}{$\downarrow$}   & OACC\textcolor{blue}{$\uparrow$}  & DACC\textcolor{blue}{$\uparrow$}  & FPR\textcolor{blue}{$\downarrow$}   & FNR\textcolor{blue}{$\downarrow$}   & OACC\textcolor{blue}{$\uparrow$} \\
    \hline
    \hline
    \multirow{4}{*}{NQ} & Prompt injection &       0.921 &       0.057&       0.079&       0.988&       0.923&       0.061&       0.077&  0.984\\
& General trigger  &       0.911&       0.050&       0.089&       0.985&       0.916&       0.054&       0.084&  0.978\\
& Jamming &       0.995&       0.096&       0.005&       1.000&       0.998&       0.098&       0.002&  1.000\\
& PoisonedRAG &       0.952&       0.037&       0.046&       0.995&       0.962&       0.028&       0.038&  0.999\\
    \hline
    \multirow{4}{*}{MS-MARCO} & Prompt injection &       0.934&       0.053&       0.066&       0.983&       0.927&       0.048&       0.073&  0.992\\
& General trigger  &       0.924&       0.038&       0.076&       0.993&       0.921&       0.029&       0.079&  0.986\\
& Jamming &       0.993&       0.042&       0.007&       0.992&       0.999&       0.034&       0.001&  0.999\\
& PoisonedRAG &       0.920&       0.037&       0.080&       0.963&       0.924&       0.039&       0.076&  0.973\\
    \hline
    \multirow{4}{*}{HotpotQA} & Prompt injection &       0.921&       0.047&       0.079&       0.976&       0.918&       0.053&       0.082&  0.971\\
& General trigger  &       0.919&       0.048&       0.081&       0.961&       0.915&       0.047&       0.085&  0.970\\
& Jamming &       0.997&       0.078&       0.003&       1.000&       0.999&       0.080&       0.001&  1.000\\
& PoisonedRAG &       0.937&       0.095&       0.063&       0.986&       0.941&       0.099&       0.059&  0.988\\
    \hline
    \end{tabular}%
\end{table}%

\begin{table}[t]
  \centering
          \tiny
  \addtolength{\tabcolsep}{-5.33pt}
  \caption{Detection results of \alg with different LLMs to compute the perplexity score. ``Pinject'' and ``GTrigger'' refer to the ``Prompt injection'' and ``General trigger'', respectively.}
    \label{perplexity calculation}%
     \renewcommand\arraystretch{1.1}
    \begin{tabular}{|c|c|cccc|cccc|cccc|}
    \hline
    \multirow{3}{*}{Dataset} & \multirow{3}{*}{Attack} & \multicolumn{12}{c|}{LLM} \\
\cline{3-14}          &       & \multicolumn{4}{c|}{GPT-2}    & \multicolumn{4}{c|}{Facebook-1.3B} & \multicolumn{4}{c|}{Bloomz-560M} \\
\cline{3-14}          &       & DACC\textcolor{blue}{$\uparrow$}  & FPR\textcolor{blue}{$\downarrow$}   & FNR\textcolor{blue}{$\downarrow$}   & OACC\textcolor{blue}{$\uparrow$}  & DACC\textcolor{blue}{$\uparrow$}  & FPR\textcolor{blue}{$\downarrow$}   & FNR\textcolor{blue}{$\downarrow$}   & OACC\textcolor{blue}{$\uparrow$}  & DACC\textcolor{blue}{$\uparrow$}  & FPR\textcolor{blue}{$\downarrow$}   & FNR\textcolor{blue}{$\downarrow$}   & OACC\textcolor{blue}{$\uparrow$} \\
    \hline
      \hline
    \multirow{4}{*}{NQ} & Pinject &       0.921&       0.057&       0.079&       0.982
&       0.895&       0.038&       0.105&       0.947&       0.917&       0.052&       0.083&  0.975\\
& GTrigger  &       0.916&       0.054&       0.084&       0.978
&       0.883&       0.046&       0.117&       0.929&       0.909&       0.037&       0.091&  0.965\\
& Jamming &       0.988&       0.098&       0.002&       1.000
&       0.979&       0.062&       0.021&       0.985&       0.990&       0.075&       0.010&  0.996\\
& PoisonedRAG &       0.962&       0.028&       0.038&       0.999
&       0.934&       0.025&       0.066&       0.982&       0.951&       0.034&       0.049&  0.988\\
    \hline
    \multirow{4}{*}{{\makecell {MS-\\ MARCO}}} &Pinject &       0.927&       0.048&       0.073&       0.992
&       0.902&       0.031&       0.098&       0.961&       0.922&       0.044&       0.078&  0.983\\
& GTrigger &       0.921&       0.029&       0.079&       0.986
&       0.889&       0.025&       0.111&       0.938&       0.915&       0.036&       0.085&  0.970\\
& Jamming &       0.998&       0.098&       0.002&       1.000
&       0.979&       0.062&       0.021&       0.985&       0.990&       0.075&       0.010&  0.996\\
& PoisonedRAG &       0.924&       0.039&       0.076&       0.973
&       0.897&       0.036&       0.103&       0.951&       0.912&       0.046&       0.088&  0.963\\
    \hline
    \multirow{4}{*}{HotpotQA} &Pinject &       0.918&       0.053&       0.082&       0.971
&       0.894&       0.029&       0.106&       0.945&       0.920&       0.049&       0.080&  0.974\\
& GTrigger &       0.915&       0.047&       0.085&       0.970&       0.891&       0.044&       0.109&       0.942&       0.916&       0.043&       0.084&  0.973\\
& Jamming &       0.999&       0.080&       0.001&       0.999
&       0.974&       0.058&       0.026&       0.984&       0.994&       0.076&       0.006&  0.998\\
& PoisonedRAG &       0.941&       0.099&       0.059&       0.988
&       0.916&       0.064&       0.084&       0.973&       0.933&       0.087&       0.067&  0.979\\
    \hline
    \end{tabular}
 \vspace{-0.1in}
\end{table}

\begin{table}[t]
  \centering
        \tiny
  \caption{OACC of \alg when using different LLMs as the final evaluation model.}
    \label{LLMs_RAG}%
      \renewcommand\arraystretch{1.1}
    \begin{tabular}{|c|c|c|c|c|c|}
    \hline
    \multirow{2}{*}{Dataset} & \multirow{2}{*}{Attack} & \multicolumn{4}{c|}{LLM} \\
\cline{3-6}          &       & GPT-3.5 & GPT-4 & Llama3.1-8B & Kimi \\
    \hline
      \hline
    \multirow{4}{*}{NQ} & Prompt injection &       0.98&       0.99&       0.97&  0.98\\
& General trigger  &       0.97&       0.99&       0.97&  0.98\\
& Jamming &       0.99&       0.99&       0.98&  0.98\\
& PoisonedRAG &       0.99&       1.00&       0.98&  0.99\\
    \hline
    \multirow{4}{*}{{\makecell {MS-MARCO}}} & Prompt injection &       0.99&       0.99&       0.98&  0.99\\
& General trigger  &       0.98&       0.99&       0.97&  0.98\\
& Jamming &       0.99&       1.00&       0.99&  0.99\\
& PoisonedRAG &       0.97&       0.98&       0.96&  0.96\\
    \hline
    \multirow{4}{*}{HotpotQA} & Prompt injection &       0.98&       0.99&       0.98&  0.98\\
& General trigger  &       0.97&       0.98&       0.95&  0.97\\
& Jamming &       0.99&       1.00&       0.98&  0.99\\
& PoisonedRAG &       0.98&       1.00&       0.98&  0.98\\
    \hline
    \end{tabular}
\end{table}

\begin{table}[t]
  \centering
  \tiny
  \caption{Detection performance of TrustRAG.}
  \renewcommand\arraystretch{1.1}
    \begin{tabular}{|c|c|c|c|c|c|}
    \hline
    Dataset & Attack & DACC\textcolor{blue}{$\uparrow$}  & FPR\textcolor{blue}{$\downarrow$}   & FNR\textcolor{blue}{$\downarrow$}   & OACC\textcolor{blue}{$\uparrow$} \\
  \hline
   \hline
    \multirow{5}{*}{NQ} & No attack   &  NA     &  0.205     &  NA     &  0.742 \\
\cline{2-6}          & Prompt injection  &  1.000     &  0.545     &  0.000     &  0.605 \\
\cline{2-6}          & General trigger  &   0.996    &  0.516     &   0.004    & 0.723 \\
\cline{2-6}          & Jamming&   1.000    &  0.563     & 0.000      & 0.646 \\
\cline{2-6}          & PoisonedRAG  &  0.941     & 0.552      &    0.059   & 0.562 \\
  \hline
    \multirow{5}{*}{MS-MARCO} & No attack   &  NA     &  0.472     &   NA    & 0.906 \\
\cline{2-6}          & Prompt injection   &   0.988    &   0.642    & 0.012      & 0.813 \\
\cline{2-6}          &General trigger     & 0.974      &    0.636   &   0.026    &  0.857\\
\cline{2-6}          &Jamming   &   0.965    &  0.674     & 0.350      & 0.824 \\
\cline{2-6}          & PoisonedRAG     & 0.921      &  0.673     &  0.079     &  0.752\\
 \hline
    \multirow{5}{*}{HotpotQA} & No attack  &    NA   & 0.331      &   NA    &  0.865\\
\cline{2-6}          & Prompt injection  &   1.000    & 0.734      & 0.000      &  0.628\\
\cline{2-6}          & General trigger   & 1.000      &   0.665    &    0.000   & 0.613 \\
\cline{2-6}          & Jamming   &   1.000    &0.744      & 0.000      &0.758  \\
\cline{2-6}          & PoisonedRAG  &  0.984     &  0.686     &   0.016    &  0.765\\
 \hline
    \end{tabular}%
  \label{TrustRAG_result}%
          \vspace{-0.1in}
\end{table}%

\begin{table}[t]
  \centering
          \tiny
  \caption{Compare \alg with prevention-based defenses, using PoisonedRAG attack and reporting the OACC.}
      \label{other defense methods}%
      \renewcommand\arraystretch{1.1}
    \begin{tabular}{|c|c|c|c|c|c|}
    \hline
    \multirow{2}{*}{Dataset} & \multicolumn{5}{c|}{Defense} \\
    \cline{2-6}   & No defense   & Paraphrasing & Duplicate  & RobustRAG & \alg \\
   \hline
     \hline
    NQ     &       0.000&       0.175&       0.132& 0.632 &0.999 \\
    \hline
    MS-MARCO    &       0.000&       0.294&       0.206&  0.574 & 0.973  \\
   \hline
 HotpotQA& 0.000& 0.231& 0.167 &  0.530   &0.988\\ \hline
    \end{tabular}%
      \vspace{-0.15in}
\end{table}%

\begin{table*}[h]
    \centering
    \caption{Different variants of \alg, where PoisonedRAG attack is considered.}
          \tiny
    \addtolength{\tabcolsep}{-1.78pt}
    \label{variants}
        \begin{tabular}{|c|cccc|cccc|cccc|cccc|cccc|} \hline
         \multirow{3}{*}{Dataset}&  \multicolumn{20}{c|}{Variant} \\ 
         \cline{2-21}&\multicolumn{4}{c|}{Variant \uppercase\expandafter{\romannumeral1}}&  \multicolumn{4}{c|}{Variant \uppercase\expandafter{\romannumeral2}}& \multicolumn{4}{c|}{Variant \uppercase\expandafter{\romannumeral3}}&\multicolumn{4}{c|}{Variant \uppercase\expandafter{\romannumeral4}} & \multicolumn{4}{c|}{Variant \uppercase\expandafter{\romannumeral5}}\\ 
 \cline{2-21}& DACC\textcolor{blue}{$\uparrow$}& FPR\textcolor{blue}{$\downarrow$}& FNR\textcolor{blue}{$\downarrow$}& OACC\textcolor{blue}{$\uparrow$}& DACC\textcolor{blue}{$\uparrow$}& FPR\textcolor{blue}{$\downarrow$}& FNR\textcolor{blue}{$\downarrow$}& OACC\textcolor{blue}{$\uparrow$}& DACC\textcolor{blue}{$\uparrow$}& FPR\textcolor{blue}{$\downarrow$}& FNR\textcolor{blue}{$\downarrow$}& OACC\textcolor{blue}{$\uparrow$}& DACC\textcolor{blue}{$\uparrow$}& FPR\textcolor{blue}{$\downarrow$}& FNR\textcolor{blue}{$\downarrow$}&OACC\textcolor{blue}{$\uparrow$} & DACC\textcolor{blue}{$\uparrow$}& FPR\textcolor{blue}{$\downarrow$}& FNR\textcolor{blue}{$\downarrow$}&OACC\textcolor{blue}{$\uparrow$} \\ \hline 
   \hline
         NQ&  0.893&  0.043&  0.107&  0.951&  0.695&  0.047&  0.305&  0.702&  0.928& 0.400& 0.073& 0.984& 0.945& 0.143& 0.055&0.986 & 0.810& 0.133& 0.190&0.854\\ \hline 
         MS-MARCO&  0.650&  0.025&  0.350&  0.704&  0.202&  0.135&  0.798&  0.287&  0.941& 0.460& 0.059& 0.990& 0.920& 0.174& 0.080& 0.978& 0.576& 0.135& 0.424&0.642\\ \hline 
         HotpotQA&  0.705&  0.046&  0.295&  0.754&  0.498&  0.117&  0.502&  0.579&  0.932& 0.431& 0.068& 0.986& 0.925& 0.163& 0.075& 0.974& 0.736& 0.169& 0.264&0.810\\ \hline
    \end{tabular}
      \vspace{-0.15in}
\end{table*}

\myparatight{Impact of different LLMs as the final evaluation model on \alg}By default, we use GPT-3.5 as the final evaluation model. Here, we analyze how different LLMs affect the performance of our \alg when used as the final evaluation model. 
The results, presented in Table~\ref{LLMs_RAG}, display the OACC values. 
The results indicate that the choice of LLM has little effect on overall answer accuracy.


\section{Discussion} 
\label{sec:discussion}

\myparatight{Compare \alg with advanced detection-based defense}%
In this section, we present a comparison between our proposed \alg and a more recent and sophisticated detection-based method, TrustRAG~\cite{zhou2025trustrag}. 
TrustRAG first clusters the retrieved texts using K-means, leveraging both cosine similarity and ROUGE metrics to capture underlying patterns that may indicate adversarial manipulation. Once the clusters are formed, it performs an internal assessment to pinpoint and eliminate potentially malicious texts.
Table~\ref{TrustRAG_result} shows the detection performance of TrustRAG on the NQ, MS-MARCO, and HotpotQA datasets. Comparing Table~\ref{main_results} with Table~\ref{TrustRAG_result}, we observe that TrustRAG does not deliver satisfactory detection results, as it introduces a high FPR, indicating that many benign texts are mistakenly identified as poisoned.

\myparatight{Compare \alg with prevention-based defenses}%
In this part, we compare our proposed \alg against three state-of-the-art prevention-based defense mechanisms.
For the Paraphrasing method~\cite{jain2023baseline}, the RAG system first employs an LLM to rewrite or reformulate the queries before passing them to the retrieval component. Duplicate text filtering~\cite{weis2004detecting} is a mechanism designed to eliminate redundant or identical entries in the poisoned database by computing hash values for each text and subsequently removing entries with matching hashes. In RobustRAG~\cite{xiang2024certifiably}, each text is first isolated and independently processed to obtain individual responses; these responses are then combined through a voting-based aggregation mechanism to produce an aggregated final output.

Note that prevention-based defense methods inherently do not detect poisoned texts; therefore, metrics specifically related to detection performance, such as DACC, FPR, and FNR, are not applicable to these methods. Instead, we evaluate and compare their effectiveness using OACC values. In particular, for our proposed method \alg, we first identify and remove poisoned texts from the dataset, then compute the OACC based solely on the remaining clean entries. The results under the PoisonedRAG attack are presented in Table~\ref{other defense methods}. In this table, ``No defense'' refers to the baseline RAG system without any defense mechanism, while ``Duplicate'' represents the duplicate text filtering approach. Detection-based methods such as PPL and PPL window are excluded from this comparison, as Table~\ref{main_results} shows they are ineffective in identifying poisoned texts. We also omit other attacks, since prior work~\cite{zou2024poisonedrag} has shown that PoisonedRAG is more effective than these attacks.

As shown in Table~\ref{other defense methods}, prevention-based defenses exhibit limited effectiveness across all evaluated benchmarks, as reflected by their consistently low OACC values on the NQ, MS-MARCO, and HotpotQA datasets. This performance degradation can be attributed to a fundamental limitation of such defenses: they do not perform explicit detection or removal of poisoned texts. As a result, the poisoned texts remain embedded within the system throughout the retrieval process, continuing to influence model responses and undermining the overall robustness of the RAG system.

\myparatight{Different variants of \alg}In this part, we consider different variants of \alg. 
\textit{Variant I} applies only chunk-wise perplexity filtering, relying solely on the conditions defined by Eq.~(\ref{pd_condition}) and Eq.~(\ref{pm_condition}) without incorporating any additional filtering criteria. \textit{Variant II} utilizes only the text similarity filtering mechanism, employing the criterion specified in Eq.~(\ref{ts_condition}) while ignoring perplexity-based scores. \textit{Variant III} combines the perplexity difference score with the text similarity filter, using both Eq.~(\ref{pd_condition}) and Eq.~(\ref{ts_condition}) to identify poisoned texts. \textit{Variant IV} leverages the perplexity maximum score together with the text similarity filtering strategy, applying Eq.~(\ref{pm_condition}) in conjunction with Eq.~(\ref{ts_condition}) to determine which texts to remove. Finally, \textit{Variant V} modifies the chunk-wise perplexity filtering strategy by eliminating the text-splitting step; instead of dividing the input into two shards, it calculates the perplexity over the entire text and then applies the text similarity criterion from Eq.~(\ref{ts_condition}).

It is important to note that our proposed \alg applies a comprehensive filtering strategy that combines three distinct criteria: perplexity difference (Eq.~(\ref{pd_condition})), perplexity maximum (Eq.~(\ref{pm_condition})), and text similarity (Eq.~(\ref{ts_condition})). These components work together to identify and remove potentially poisoned texts. Table~\ref{variants} presents the performance of several ablated variants of \alg under the PoisonedRAG attack, where each variant omits one or more of the filtering conditions. By comparing the results in Table~\ref{variants} with those in Table~\ref{main_results}, we observe a noticeable drop in performance across all variants when compared to the full \alg method. 
This underscores the value of combining all three filters, as each aids accurate detection of poisoned texts.

\begin{table}
    \centering
    \tiny
      \addtolength{\tabcolsep}{-0.5pt}
    \caption{Results of \alg under adaptive attacks.}
    \label{adaptive attacks}
         \renewcommand\arraystretch{1.1}
    \begin{tabular}{|c|cccc|cccc|} \hline  
                 \multirow{3}{*}{Dataset}&\multicolumn{8}{c|}{Adaptive attack}\\
                 \cline{2-9} & \multicolumn{4}{c|}{Adaptive attack \uppercase\expandafter{\romannumeral1}} & \multicolumn{4}{c|}{Adaptive attack \uppercase\expandafter{\romannumeral2}}\\ 
                 \cline{2-9} & DACC\textcolor{blue}{$\uparrow$}& FPR\textcolor{blue}{$\downarrow$}& FNR\textcolor{blue}{$\downarrow$}& OACC\textcolor{blue}{$\uparrow$}& DACC\textcolor{blue}{$\uparrow$}& FPR\textcolor{blue}{$\downarrow$}& FNR\textcolor{blue}{$\downarrow$}&OACC\textcolor{blue}{$\uparrow$} 
    \\ \hline  
  \hline
                 NQ& 0.952& 0.097& 0.048& 0.982& 0.983& 0.053& 0.017&0.996\\ \hline 
 MS-MARCO& 0.933& 0.039& 0.067& 0.974& 0.992& 0.068& 0.008&0.999\\ \hline
 HotpotQA& 0.938& 0.100& 0.062& 0.978& 0.977& 0.094& 0.023&0.990\\ \hline
 ENQ& 0.957& 0.062& 0.043& 0.991& 0.979& 0.040& 0.021&0.993\\ \hline
 EMS-MARCO& 0.923& 0.026& 0.077& 0.970& 0.989& 0.045& 0.011&0.998\\ \hline 
    \end{tabular}
        \vspace{-0.15in}
\end{table}

\myparatight{Adaptive attacks}%
To further evaluate the robustness of \alg against sophisticated adversaries, we design two strong adaptive attacks that specifically target the defense mechanisms used by \alg. These attacks are constructed based on an understanding of how \alg filters poisoned texts, with the goal of modifying malicious inputs so they closely resemble benign ones and thereby avoid detection. 
The details of the two adaptive attacks are as follows:

\textit{Adaptive attack I:} This attack uses GPT-4 to automatically paraphrase poisoned texts. The resulting outputs are semantically malicious but syntactically crafted to resemble benign samples. The specific prompt used to generate these poisoned texts is presented below:

\begin{tcolorbox}[colback=greyBackground, colframe=black!80,
                  arc=1mm, auto outer arc,
                  boxrule=1pt,
                  left=1mm, 
                  right=1mm,
                  top=1mm,  
                  bottom=1mm,
                  ]
\footnotesize
You are a helpful AI assistant, below are a query and wanted answer. 
Please generate texts which let other LLMs respond with the wanted answer when asked the question. \\
\textbf{Query: [query]}  \\
\textbf{Wanted answer: [wanted answer]} \\
\textbf{Texts generated:}  
\end{tcolorbox}
\textit{Adaptive attack II:} In this attack, poisoned texts are manually rewritten by human annotators. The objective is to restructure and rephrase the original content so that it becomes difficult to distinguish from clean texts, both in terms of style and linguistic features

Table~\ref{adaptive attacks} summarizes the defense performance of \alg across five datasets when evaluated against these two adaptive attack strategies. The results indicate that, despite the attacker’s deliberate efforts to evade detection through paraphrasing, \alg remains highly effective.

\myparatight{Computational overhead of \alg}We consider the runtime of various defense methods to measure the computational overhead. 
``RAG only'' refers to the standard RAG system without any defense mechanism applied.
Fig.~\ref{fig:runtime} compares the running time of different methods, showing their computational efficiency. 
\alg adds only a slight computational overhead compared to the standard RAG system.

\begin{figure}[t]
    \centering
    \includegraphics[width=0.9\linewidth]{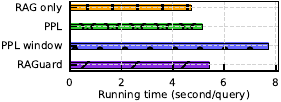}
    \caption{Computational overhead of different methods.}
    \label{fig:runtime}
        \vspace{-0.15in}
\end{figure}


\section{Conclusion} \label{sec:conclusion}

This paper introduces \alg, a defense strategy against poisoning attacks in RAG systems. \alg enhances retrieval by expanding the number of retrieved documents and applying a filtering mechanism that combines perplexity scoring with text similarity to accurately identify and remove malicious content. Extensive experiments demonstrate \alg's effectiveness across various scenarios.

\section*{Acknowledgement}
We thank the anonymous reviewers for their comments.

\bibliography{refs}
\bibliographystyle{IEEEtran}


\appendix

\subsection{Proof of Theorem~\ref{theorem_1}} 
\label{sec:appendix_1}

Since at most a $\rho$ fraction of texts in the knowledge database are poisoned, the expected number of poisoned texts in the retrieved set $H(Q,\hat{D},N)$ is bounded by
$
\mathbb{E}[X] \;\leq\; \rho N,
$
where $X$ denotes the number of poisoned candidates prior to filtering.
Now, each text in $H(Q,\hat{D},N)$ is passed through RAGuard's three independent filters. 
Since $
\beta_{\mathrm{total}} \;\leq\; \beta_{\mathrm{PD}}\cdot \beta_{\mathrm{PM}}\cdot \beta_{\mathrm{TS}}
$, thus the expected number of poisoned texts after filtering is
$
\mathbb{E}[Y] \;\leq\; \rho N \beta_{\mathrm{total}},
$ where $Y$ is the number of poisoned texts after filtering.
The final top-$k$ texts are selected from this filtered set. Let $Z$ denote the number of poisoned texts among the top-$k$. Since each text survives independently with probability at most $\rho \beta_{\mathrm{total}}$, we can model $Z$ as being stochastically dominated by a binomial random variable
$
Z \;\preceq\; \text{Binomial}(k, \rho \beta_{\mathrm{total}}).
$
Hence the expectation satisfies
$
\mu = \mathbb{E}[Z] \;\leq\; k \rho \beta_{\mathrm{total}}.
$


Define
$
\delta = \frac{k/2 - \mu}{\mu} > 0.
$
By the multiplicative Chernoff bound, if $Z \sim \text{Binomial}(k,p)$ with mean $\mu = kp$, then for any $\delta > 0$,
$
\Pr[Z \geq (1+\delta)\mu] \leq \exp\!\left(-\frac{\delta^2 \mu}{3}\right).
$
Applying this with $p = \rho \beta_{\mathrm{total}}$, we obtain
$
\Pr[Z \geq k/2] \leq \exp\!\left(-\frac{\delta^2 \mu}{3}\right).
$
Substituting $\mu \leq k \rho \beta_{\mathrm{total}}$ yields
$
\Pr[Z \geq k/2] \;\leq\; \exp(-c k),
$
where 
$
c = \frac{1}{3}(\tfrac{1}{2} - \rho \beta_{\mathrm{total}})^2 \rho \beta_{\mathrm{total}}.
$
Finally, by assumption, the LLM outputs the correct answer whenever $Z < k/2$. Therefore the probability of producing the correct output is
$
\Pr[Z < k/2] = 1 - \Pr[Z \geq k/2] \;\geq\; 1 - \exp(-c k).
$
Since OACC measures the fraction of queries for which the system outputs the correct answer, we conclude that
$
\mathrm{OACC} \;\geq\; 1 - \exp(-c k).
$

\end{document}